\let\accentvec\vec
\let\vec\accentvec
\theoremstyle{plain}
\newtheorem*{example*}{Example}
\journalname{Natural Computing}
\begin{document}

\title{Gauge-invariance in cellular automata}


\author{Pablo Arrighi           \and
    Giuseppe Di Molfetta    \and
    Nathana\"el Eon}

\authorrunning{P. Arrighi, G. Di Molfetta, N. Eon}

\institute{P. Arrighi \and N. Eon 
    (\email{nathanael.eon@lis-lab.fr}) 
    \at Aix-Marseille Université, Université de Toulon, CNRS, LIS, Marseille, 
        France
    \and G. Di Molfetta \at 
    Université Publique, CNRS, LIS, Marseille, France and Quantum Computing 
    Center, Keio University, 3-14-1 Hiyoshi, Yokohama, Kanagawa, 223-8522, 
    Japan
}

\date{6 January 2022}

\maketitle

\begin{abstract}
    Gauge-invariance is a fundamental concept in Physics---known to provide
    mathematical justification for the fundamental forces. In this paper, 
    we provide discrete counterparts to the main gauge theoretical concepts
    directly in terms of Cellular Automata. More precisely, the notions of 
    gauge-invariance and gauge-equivalence in Cellular Automata are formalized.
    A step-by-step gauging procedure to enforce this symmetry upon a given
    Cellular Automaton is developed, and three examples of
    gauge-invariant Cellular Automata are examined.
    \keywords{Gauge-invariance \and Cellular Automata \and Quantum Cellular
        Automata}
\end{abstract}


\section{Introduction}
\label{sec:intro}

In Physics, symmetries are essential concepts used to derive the laws which 
model nature. Among them, gauge symmetries are central, since they provide 
the mathematical justification for all four fundamental interactions: the 
weak and strong forces (short range interactions), electromagnetism 
\cite{quigg2013gauge} and to some extent gravity (long range interactions).
In Computer Science, cellular automata (CA) constitute the most established 
model of computation that accounts for euclidean space. Yet its origins lies 
in Physics, where they were first used to model hydrodynamics and multi-body 
dynamics, and are now commonly used to model particles or waves. In this paper,
the key notions of gauge-invariance are defined in the discrete model of 
CA, and a counterpart to the gauging procedure---a 
step-by-step method to enforce this symmetry---is formalized 
within CA.

These methods may lead to natural, Physics-inspired CA. More importantly, 
the fields of numerical analysis, quantum simulation, digital Physics are 
constantly looking for discrete schemes that simulate known Physics 
\cite{georgescu2014quantum,notarnicola2020real}. Quite often, these discrete
schemes seek to retain the symmetries of the simulated Physics; whether in 
order to justify the discrete scheme as legitimate, or in order to do 
the Monte Carlo-counting right \cite{hastings1970monte}. Generally 
speaking, since gauge symmetries are essential in Physics, having a 
discrete counterpart of it may also be \cite{Arrighi2020}. 

Interestingly, this way of enforcing local redundancies also bears some 
resemblances with error-correction, as was pointed out in the context of quantum 
computation in \cite{kitaev2003fault,nayak2008non}, and echoes the fascinating question of 
noise resistance within spatially distributed models of computation 
\cite{harao1975fault,Toom}.

Although we authors come from the field of quantum computation and 
simulation, the formalism we use is totally devoid of 
least action principle, or Lagrangian. The notions here are directly 
formulated in terms of the discrete dynamical system. We believe that this 
provides a uniquely direct route to the root concepts. This discrete 
mathematics framework makes the presentation original, and simpler. But it 
also allows for more rigorous definitions, that in turn allow us to prove 
an equivalence lemma.

This work is based on two previous conference papers 
\cite{arrighi2018gauge,arrighi2019non} by the authors. It also integrates 
a quantum example \cite{Arrighi2020} by one of the authors.
The examples expressed here provide what seems to be the simplest
non-trivial Gauge theories so far and illustrates the key concepts. 
Given that Gauge theories are infamously difficult, we think 
this may be a remarkable pedagogical asset.

The paper is organized as follows. In Sec. \ref{sec:def} we introduce the 
formal definitions of cellular automata---both classical and quantum---
and the notions of gauge transformations and gauge symmetry. 
In Sec. \ref{sec:proc},
a discrete counterpart to the gauging procedure is developed and illustrated
through a simple example. It provides the route one may take in order 
to obtain a gauge-invariant CA, starting from one that does not implement 
the symmetry. Sec. \ref{sec:examples} goes through three examples of 
gauge-invariant CA from the literature given in the same framework: 
a very simple classical version \cite{arrighi2018gauge}, 
a generalization to a larger type of gauge transformations 
\cite{arrighi2019non}, and a quantum CA (QCA) \cite{Arrighi2020}.
In Sec. \ref{sec:theorems}, the notion of equivalence between gauge-invariant
CA is formalized and characterized. In Sec. \ref{sec:conclu} we summarize, 
provide related works and perspectives.

\section{Definitions}
\label{sec:def}

\subsection{Cellular automata}
A cellular automaton (CA) is a dynamical system which operates on a discrete,
uniform space and evolves in discrete time steps through the application---
homogeneously across space---of a local operator. Let us make this formal.

\paragraph{Notations.}
For the reader who does not need any reminder about CA, here is a list of 
notations which will be used in the following:
\begin{itemize}
    \item ${\mathbb{Z}^d}$: underlying structure of space with dimension $d$.
    \item $\Sigma$: alphabet.
    \item ${\mathcal{C}}=\Sigma^{\mathbb{Z}^d}$: set of all configurations.
    \item ${\mathcal{N}}$: neighbourhood.
    \item $c_x$ for $c\in{\mathcal{C}}$ and $x\in{\mathbb{Z}^d}$: shorthand for $c(x)$
    \item $c_{t,x}$ for $c\in{\mathcal{C}}$, $x\in{\mathbb{Z}^d}$ and $t\in\mathbb{N}$:
          shorthand for $\big(F^t(c)\big)_x$
    \item $c_{|I}$ for $c\in{\mathcal{C}}$ and $I\subset{\mathbb{Z}^d}$: shorthand for
          $c:I \longrightarrow \Sigma$ the configuration restricted to 
          a set $I$ of specific positions.
\end{itemize}


\paragraph{Space-time representation.} The discrete, uniform space on which
CA are based, is usually the grid ${\mathbb{Z}^d}$ with $d$ the 
dimension---although more general definitions exist, that replace the grid
by bounded degree graphs, typically Cayley graphs. 
The results will be given for ${\mathbb{Z}^d}$ with
$d$ the dimension, and our examples will only be in one-dimension ($d=1$) 
for simplicity.


\paragraph{Alphabet and classical configuration.} 
The \emph{alphabet} $\Sigma$ is a countable---often finite---set.

\begin{definition}[Classical configuration]\label{def:config}
    A \emph{classical configuration} $c$ over an alphabet $\Sigma$
    is a function that associates a state to each point in ${\mathbb{Z}^d}$:
    \begin{equation}
        c: {\mathbb{Z}^d} \longrightarrow \Sigma.
    \end{equation}
    The set of all configurations will be denoted $\mathcal{C}$.
\end{definition}

A configuration should be seen as the state of the CA at a given time.
We use the shorthand notation $c_x = c(x)$ for $x\in{\mathbb{Z}^d}$ and
$c_{|I}$ for the configuration $c$ restricted to the set $I$---i.e. 
$c:I\longrightarrow \Sigma$---for $I\subset{\mathbb{Z}^d}$. The association
of a position and its state is called a \emph{cell}.

\paragraph{Local rule.} Now that we have a way to describe the system
at a given time---using a configuration---we should define the local
rule. A neighbourhood is a \emph{finite} subset of ${\mathbb{Z}^d}$, which is 
denoted ${\mathcal{N}}$. The local rule takes as input a configuration 
restricted to the neighbourhood ${\mathcal{N}}$ of a cell and outputs the next 
value of the cell. 
\begin{equation}
    f: \Sigma^{{\mathcal{N}}} \longrightarrow \Sigma.
\end{equation}
Applying this local rule at every position simultaneously
defines the evolution of a configuration.

\begin{definition}[Cellular Automaton]\label{def:ca}
    A cellular automaton $F$ with alphabet $\Sigma$, dimension $d$
    and neighbourhood ${\mathcal{N}}$ 
    is a function $F:\mathcal{C}\longrightarrow\mathcal{C}$ to
    another configuration by applying a local rule 
    $f:\Sigma^{\mathcal{N}}\longrightarrow 
    \Sigma$ at every position synchronously:
    \begin{equation}
        F(c)_i = f(c_{|i+{\mathcal{N}}})
    \end{equation}
    where $i\in{\mathbb{Z}^d}$.
\end{definition}

Because the CA defines the configuration at time $t+1$
knowing the configuration at time $t$, we will denote by $c_{t,x}$ the
value of a cell at position $x$ and time $t$.

\subsection{Quantum cellular automata}

The definition of QCA used here is commonly known as partitioned 
QCA (PQCA) \cite{schumacher2004reversible,arrighi2019overview}. 
This choice is motivated by the similarity with the classical version while 
not loosing any generality due to the intrinsically universal nature of 
PQCA \cite{arrighi2012partitioned}.


\paragraph{Hilbert space of quantum configurations} 
The quantum configurations differ from classical configurations because they 
require a distinguished element of $\Sigma$ to be called the \emph{empty 
state} and such that only a finite number of cells are not empty.

\begin{definition}[Finite unbounded configurations]\label{def:quantumconf}
    Consider $\Sigma$ the alphabet, with $0$ 
    a distinguished element of $\Sigma$, called the \emph{empty} state. 
    A \emph{finite unbounded configuration} $c$ over $\Sigma$ is a function 
    $c: \mathbb{Z}^d \longrightarrow \Sigma$, 
    such that the set of the
    $(i_1,\ldots,i_d)\in\mathbb{Z}^d$ for which $c_{i_1\ldots i_d}\neq 0$, 
    is finite. 
    The set of all finite unbounded configurations will be denoted 
    $\mathcal{C}_f$.
\end{definition}

The finite unbounded configurations are taken as basis 
to build the \emph{Hilbert space of configurations} which allows for 
superposition of configurations.

\begin{definition}[Hilbert space of configurations]\label{def:hilbconf}
    The {\em Hilbert space of configurations} is that having orthonormal 
    basis $\{\ket{c}\}_{c\in\mathcal{C}_f}$. It will be denoted 
    $\mathcal{H}$.
\end{definition}

\paragraph{PQCA.}
A PQCA works by partitioning the space into supercells, applying a local
unitary operator $U$ on those supercells, and then doing so again at shifted
positions.

\begin{definition}[PQCA]\label{def:pqca}
    A $d$-dimensional {\em partitioned  QCA} (PQCA) $G$ is induced by a 
    {\em scattering unitary} $U$ taking a hypercube of $2^d$ cells into a 
    hypercube of $2^d$ cells, i.e. acting over 
    $\mathcal{H}_{\Sigma}^{\otimes 2^d}$, and preserving quiescence, 
    i.e. $U\ket{0\ldots 0}=\ket{0\ldots 0}$. 
    Let $J=(\bigotimes_{2\mathbb{Z}^d} U)$ over $\mathcal{H}$ and 
    $\tau=\tau_1\ldots\tau_d$ the diagonal translation. The induced global 
    evolution is $J$ at even steps, and $\tau^\dagger J \tau$ at odd steps.
\end{definition}
The local unitary $U$ in a PQCA can be thought of as a reversible version of
the local function $f$ of a classical CA.

\subsection{Gauge-invariance in CA.}

\paragraph{Gauge transformations.} A global gauge transformation is a function
that maps configurations to configurations through the
application of a position-dependent, local gauge transformation at 
every position.

\begin{definition}[Local gauge transformation group]
    \label{def:localgaugetransf} Let $s\in\mathbb{N}$ be the radius. A group
    $G$ of local gauge transformations with alphabet 
    $\Sigma$, radius $s$ and space dimension $d$ is a subgroup of the 
    bijections over $\Sigma^{(2s+1)^d}$ with further requirement that any two 
    local gauge transformations, applied at distinct positions on words of size 
    $(4s+1)^d$, commute. 
    It is extended to act upon $\mathcal{H}_{\Sigma^{(2s+1)^d}}$) by linearity.
    
    We shall 
    use the abuse of notation $g_x\in G$ for a local gauge transformation 
    over $\mathcal{C}$ (or $\mathcal{H}$ in the quantum case)
    where the local transformation is applied around the cell at position $x$,
    i.e. on the cells at positions $\{x-s,\ldots, x+s\}$, and is the
    identity everywhere else.
\end{definition}

In the following, local gauge transformations will refer to the classical or 
quantum case depending on the context. 


\begin{definition}[Global gauge transformation]
    \label{def:gaugetransf}
    Let $\Sigma$ be the alphabet, $s\in\mathbb{N}$ the radius, $d$ the 
    dimension and $G$ a local gauge transformation group with respect to
    these $\Sigma$, $s$ and $d$. A function $\gamma$ 
    is a \emph{gauge transformation} 
    with respect to the local gauge transformation group $G$ if there exists
    a family $(g_x)_{x\in{\mathbb{Z}^d}}$ of elements of $G$ such that:
    \begin{equation}
        \gamma = \prod_{x\in\mathbb{Z}^d} g_x
    \end{equation}
    and this product is unambiguous, because for any $x,y$ in ${\mathbb{Z}^d}$,
    with $x\neq y$, 
    the following commutation relation holds $[g_x, g_y]=0$.
\end{definition}

\begin{remark} At this point, two remarks need to be made about gauge
    transformations :
    \begin{enumerate}\label{rem:gaugetransf}
        \item An element $\gamma$ of $\Gamma$ can now be thought of as a
              configuration with alphabet $G$, where $g_x$ is the local
              state at position $x$.
        \item From now on, we will call \emph{local gauge transformations} the
              elements of $G$ and \emph{gauge transformations} the elements of
              $\Gamma$.
    \end{enumerate}
\end{remark}

\paragraph{Gauge-invariance.} Invariance under $\Gamma$ for a CA $F$ means that
there is a CA $Z: \Gamma \longrightarrow \Gamma$ such that for any gauge
transformation $\gamma \in \Gamma$ the following equality holds:
$Z(\gamma)\circ F = F\circ \gamma$. It means that gauge transforming before
the evolution or afterwards is equivalent. The reason we introduced $Z$ and
did not allow for every possible transformation after the evolution, is because
we want $F$ to be deterministic, from which follows that the gauge 
transformation to be applied
after the evolution should be deterministically computed from the $\gamma$
applied before.

Since the evolution is local, and the gauge transformation is in itself a
configuration (remark-\ref{rem:gaugetransf}), the function $Z$ is
a CA with alphabet $G$.
This leads us to the formal definition-\ref{def:gaugeinv}.

\begin{definition}[Gauge-invariance in CA]\label{def:gaugeinv}
    Let $F$ be a (possibly quantum) CA with alphabet $\Sigma$ and
    space dimension $d$. Let $G$ be a local gauge transformation group.
    Let $\Gamma$ be the corresponding set of gauge transformations.
    
    $F$ is \emph{gauge-invariant} under $\Gamma$ if there exists a cellular 
    automaton $Z$ with alphabet $G$ such that for all $\gamma\in \Gamma$:
    \begin{equation}\label{eqdef:gaugeinv}
        Z(\gamma) \circ F = F \circ \gamma
    \end{equation}
    We say that $F$ is gauge-invariant with respect to $\Gamma$ and $Z$.
\end{definition}

This can be seen as a special commutation relation between the evolution rule
$F$ and the gauge transformation set $\Gamma$. This idea is illustrated in
Fig. \ref{fig:gaugeinvdef}.

In Physics $Z$ is often taken as the identity, thus making
of gauge-invariance a commutation relation.

\begin{figure}[ht!]
    \centering
    \includegraphics[width=0.3\textwidth]{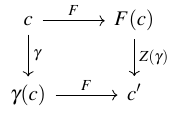}
    \caption{Gauge-invariance}
    \label{fig:gaugeinvdef}
\end{figure}

\section{Gauging procedure.}\label{sec:proc}

Starting from a CA $R$ and a set of gauge transformations
$\Gamma$, it is now possible to check whether $R$ is gauge-invariant under
$\Gamma$ using definition-\ref{def:gaugeinv}. However, in the case that it
is not gauge-invariant, is there a way to extend $R$ into another CA $T$
gauge-invariant under an extension of $\Gamma$? 
In other words, can we always complete $R$ into a gauge-invariant CA? Is 
there a minimal way of doing so?

These questions are very general and, to the best of our knowledge, they do 
not have an answer
in the general case yet. In this section, we provide a guideline in order to
create such a CA $T$. This guideline is called the gauging procedure. Many of
the concepts explored through this procedure, such as the introduction of a
\emph{gauge field}, come from Physics. The procedure itself is Physics-inspired.
It is not a rigorous method that will work in every case, it has
some degrees of freedom and it will need to be adapted depending on the
specific problem---more precisely depending on the structure of the underlying
space-time, the gauge transformation set and the alphabet. Each of the 
following subsections corresponds to a step in the procedure and begins
by developing the general concept before applying it to a running 
example for illustration.

\subsection{Starting point}
The starting point of this procedure is a CA $R$ with state
space $\Sigma$ and a gauge transformation set $\Gamma$ (induced by a
local gauge transformation group $G$). To illustrate the procedure, we will use
a running example in one dimension of space ($d=1$).

\begin{example*}[1]

For our running example we will use a classical, reversible, partitioned CA.
The alphabet for this CA is 
$\Sigma = \{0, 1\}^2$ denoted $\{\square, \blacksquare\}^2$ 
(used in the drawing). We use the following convention to differentiate
the left component from the right component of a cell
$c_x = (c^l_x, c^r_x)$
with $l$ for left and $r$ for right and $c^l_x$ as well as $c^r_x$ in 
$\{0,1\}$.

The local rule
is the one that transports the right component of the state to the right and
the left component of the state to the left. 
Formally, let us denote by $R$ the CA with local evolution $r$.
Focussing on the next left component of the state at position $x$ and 
right
component of the state at position $x+1$ for time $t+1$
we have the following
equation:
\begin{align}\label{eq:r2}
    (c^l_{t+1,x}, c^r_{t+1,x+1}) & = r (c_{t,x}, c_{t,x+1})   \\
                                 & = (c^l_{t,x+1}, c^r_{t,x})
\end{align}

$r$ is not a local rule properly speaking because it does not compute the value
of a specific cell but two components of two different cells. However, it can 
be formulated as a local rule with neighbourhood $1$---i.e. the cell at 
position $x$ is computed from the previous values of the cells at positions 
$x-1$ and $x+1$.

This evolution is illustrated in Fig. \ref{fig:localrule}. 

\begin{figure}[ht!]
    \centering
    \includegraphics[width=0.5\textwidth]{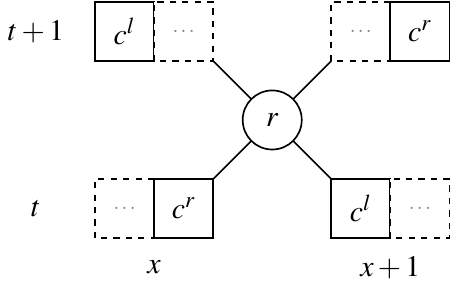}
    \caption{The moving particles of Ex. 1.}
    \label{fig:localrule}
\end{figure}

At this point, one can notice that the CA is made of two independent grids 
(the one for
which $x+t$ is even, and the one for which $x+t$ is odd)
as shown in Fig. \ref{fig:dualgrid}. Indeed, since the evolution takes
the left component of a state to the left and similarly for the right
component, there is no interaction between a cell at position $x$ and one at 
position $x+1$ at time $t$.
Still, both grids are kept here so that the framework 
stays the same for later examples.
\begin{figure}[ht!]
    \centering
    \includegraphics[width=0.7\textwidth]{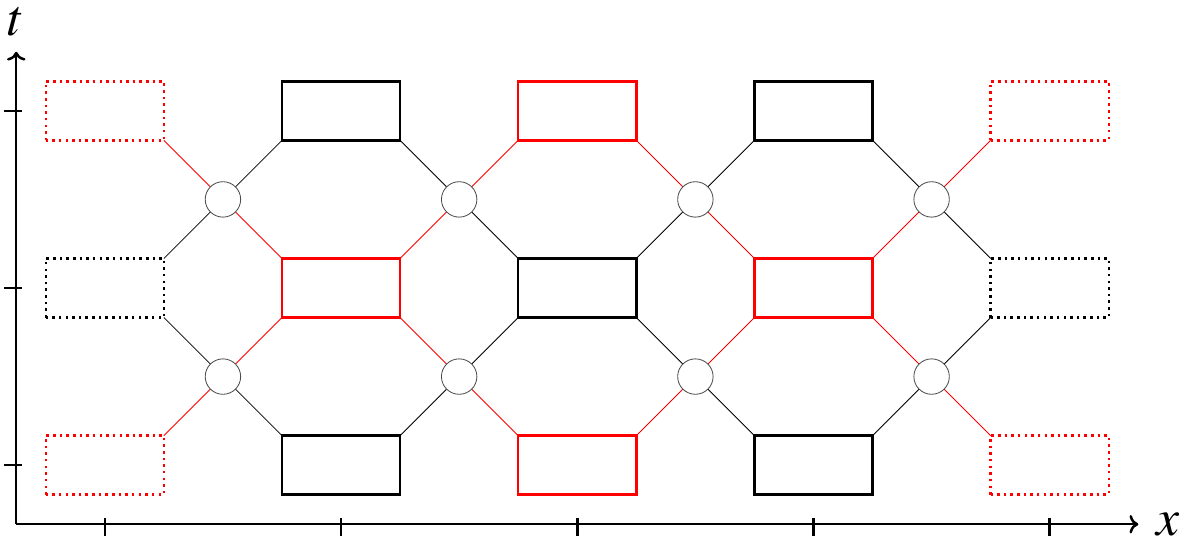}
    \caption{\label{fig:dualgrid}The space-time structure of Ex. 1 is that
        of two independent sub-lattices.}
\end{figure}

In this example, $G= \{ I\otimes I, \tau \otimes\tau\}$ with $I$ the identity, 
$\tau(0) = 1$ and $\tau(1) = 0$.
Changing identically both components
of the state is also what is usually done in Physics.
Notice that $G$ is abelian ; in Sec. \ref{sec:examples}, 
a non-abelian example will be detailed.



\end{example*}

\paragraph{Verification of the gauge-invariance.}
Having a CA $R$ and the set of gauge transformations
$\Gamma$, the next step is to check whether it is gauge-invariant. Showing 
that a CA is not gauge-invariant can
usually be done in quite a straightforward way by applying different gauge 
transformations on the different inputs of the local evolution rule (or 
local unitary in the quantum case).

\begin{example*}[1]
The example is not gauge-invariant. This is illustrated in 
Fig. \ref{fig:nongi} where
the gauge transformation before the evolution (left side of the figure) cannot
be compensated after the evolution (right side of the figure): whatever
local gauge transformation is applied after the evolution, the final state on
both sides of the figure will never match.

\begin{figure}[ht!]
    \centering
    \includegraphics[width=0.55\textwidth]{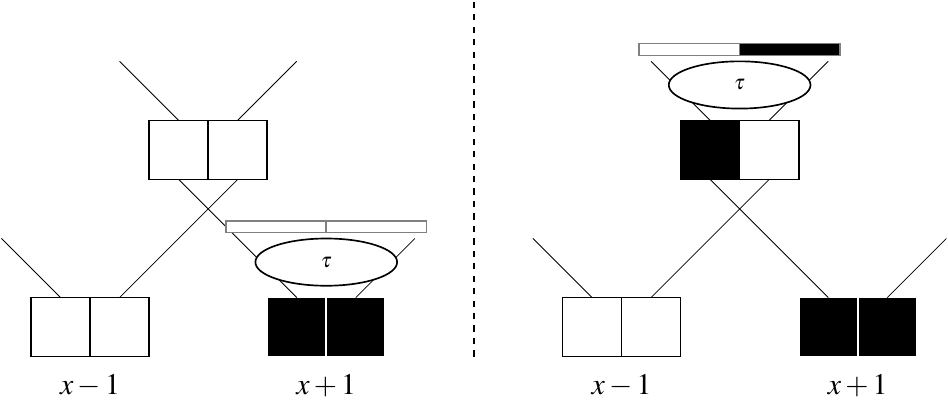}
    \caption{\label{fig:nongi}The initial CA is not gauge-invariant because
        a local transformation done before the evolution cannot be compensated
        for.}
    \label{}
\end{figure}
\end{example*}

\subsection{Introducing the gauge field.}

If the CA is not gauge-invariant, then, following the Physics tradition, we 
seek to extend it into a wider CA, acting over the original field plus a 
\emph{gauge field}.
This gauge field also
changes under a gauge transformation, and it is with respect to this extended
gauge transformation that the extended CA will be gauge-invariant.

\paragraph{Positioning.}
The first question that arises is the
positioning of the gauge field. In the lattice-gauge theory tradition, 
the usual way to position
the gauge field is in-between every two states as shown in
Fig. \ref{fig:framework}.

\begin{figure}[ht!]
    \centering
    \includegraphics[width=.8\textwidth]{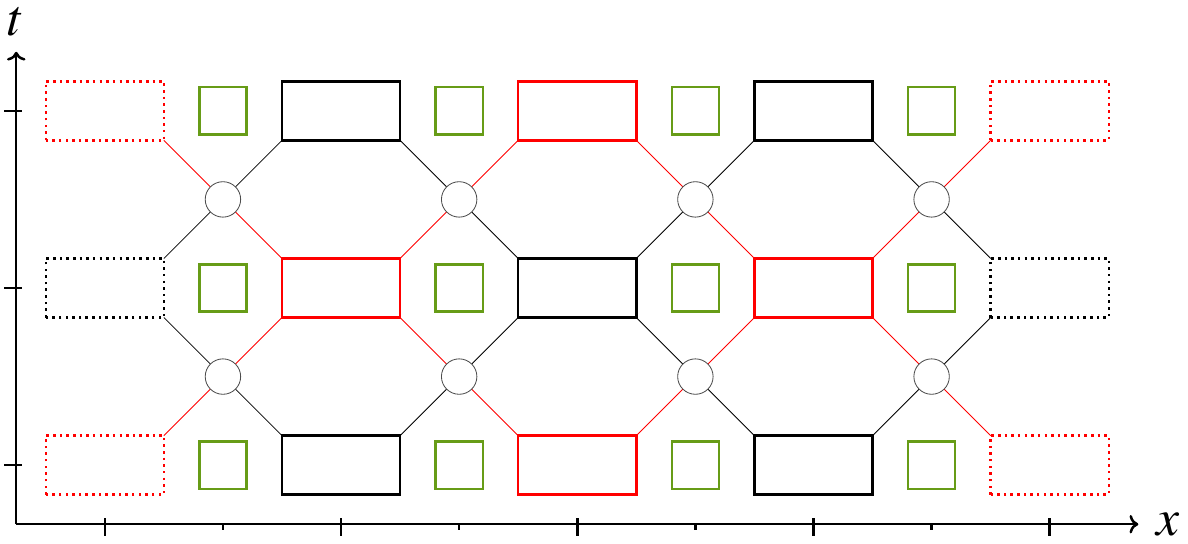}
    \caption{Space-time structure with a gauge field  (in green) in between
        every two cells.}
    \label{fig:framework}
\end{figure}

\begin{example*}[1]
In this example, we make the standard choice of having the gauge field 
positioned in between every two-states. Therefore we will 
denote by $A_{x,x+1}$ the value of the gauge field
in between the position $x$ and $x+1$.
\end{example*}

\paragraph{Set of values.}
The second question that arises is what set of values $\Lambda$ can the gauge 
field take? Usually, one seeks to add the least amount of information that 
will allow for gauge-invariance. Often it turns out that 
the gauge field keeps track of the difference of gauge between the 
states that surrounds it---i.e. it is analogous to a parallel transport 
operator between two separate tangent spaces in a differential manifold. 

\begin{example*}[1]

In this example, we chose $\Lambda=\{I\otimes I,\tau\otimes\tau\}$.
Since there are only two elements in $\Lambda$, the values of this set can be
encoded in a single bit.

This choice is motivated by the fact that the difference of gauge between 
two neighboring positions can only be one of the elements of $\Lambda$.
\end{example*}

In sec-\ref{sec:examples}, other examples for the set of values of the gauge 
field will be developed.

\paragraph{Updating the local rule.} Having defined where and what the gauge
field is, the third question that arises is: how does it interact with a 
configuration? More specifically, how does the local rule (or local unitary)
depend on the gauge field? Usually the gauge field keeps track of the
difference of gauge between neighboring states, and so usually the gauge 
field is used to cancel this difference.

\begin{example*}[1]


One choice for the extended local rule is to harmonize the gauges of the inputs
before applying the previously defined local rule. To do so, in our example 
Eq. \eqref{eq:r2} transforms into:
\begin{align}\label{eq:r3}
    (c^l_{t+1,x}, c^r_{t+1,x+1}) & = r_A (c_{t,x}, c_{t,x+1})                                   \\
                                 & = r\left(A_{x,x+1}\big(c_{t,x}\big), 
                                 A^{-1}_{x,x+1}\big(c_{t,x+1}\big)\right)
\end{align}
where $r_A$ denote the extended, $A$-dependant, local rule.
\end{example*}

\paragraph{Gauge transformation.}
The role of the gauge field at this point is to obtain gauge-invariance.
Writing the gauge-invariance condition \eqref{eqdef:gaugeinv} forces a fourth
question: how does the gauge field transform under a gauge transformation?
To answer this question, one need to write down the gauge-invariance condition,
which puts a constraint over the transformation of the gauge field. 

If the gauge field does not change under a gauge transformation, one can 
easily see that adding the gauge field does not help acquire gauge-invariance, 
since the information added through the gauge field would not help cancel 
the effect of gauge transformations. 
Therefore, the gauge transformations need to be extended to also act upon the 
gauge field. The set of extended gauge transformations will be denoted 
$\overline{G}$.

\begin{example*}[1]
The action of the gauge transformation, in between position $x$ and $x+1$,
over $A_{x,x+1}$ will be denoted by $g(A_{x,x+1})$. This choice will help 
to stay clear of new notations.
The gauge-invariance condition \eqref{eqdef:gaugeinv} puts a constraint which
may wholly determine this action.
Locally, for the running example with $g_x, g_{x+1}$
two elements of $G$, the gauge-invariance condition writes:
\begin{align}\label{eq:localinv}
    r_{g(A)} \circ (g_x \otimes g_{x+1}) = ({Z(g)_x}
    \otimes {Z(g)_{x+1}}) \circ r_A
\end{align}

The gauge-invariance condition requires the existence of a CA $Z$. 
Here we choose $Z=Id$ (but other choices would have been possible).
Thus, Eq. \eqref{eq:localinv} 
transforms into the following two equations:
\begin{equation}
    \begin{cases}\label{eq:localinv2}
        g(A_{x,x+1})^{-1} \circ g_{x} & =
        g_{x+1} \circ A_{x,x+1}           \\
        g(A_{x,x+1}) \circ g_{x+1}    & =
        {g_x} \circ A^{-1}_{x,x+1}
    \end{cases}
\end{equation}

Those two equations are redundant and simplify into:
\begin{equation}
    g(A_{x,x+1}) = g_x \circ A_{x,x+1} \circ g_{x+1}
\end{equation}
Therefore, from the gauge-invariance condition and through the choice of a
$Z$, the extension of gauge transformations to the gauge field is fully
determined. Here they are expressed in a gauge-field centric way, however let 
us express them equivalently in a way that is centered on $x$. 

Indeed, after this extension, the local gauge transformation group $G$ is 
extended into a subgroup $\overline{G}$  of the bijection over 
$\Lambda \otimes \Sigma \otimes \Lambda$. 
Then for any $\overline{g}\in \overline{G}$, 
we define $g \in G$ such 
that for any $A_0,c,A_1\in \Lambda\otimes\Sigma\otimes \Lambda$:
\begin{equation}\label{eq:gaugetransfall}
    \overline{g}\big(A_0,\; c,\; A_1)=\Big(A_0 \circ g,\;
    g(c),\; g \circ A_1\Big)
\end{equation}
The application of this local gauge transformation is 
illustrated in Fig. \ref{fig:gaugetransfext}. 

\begin{figure}[ht!]
    \centering
    \includegraphics[width=\textwidth]{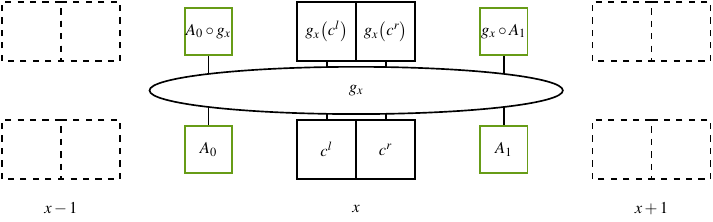}
    \caption{The extended local gauge transformation.}
    \label{fig:gaugetransfext}
\end{figure}

The extended local gauge transformations no longer have disjoint supports, 
but they do commute with one another: 
$$g_{x+1} \circ g_x(A_{x,x+1}) = g_{x+1} \circ 
A \circ g_x = g_x \circ g_{x+1} (A_{x,x+1})$$ for any $A_{x,x+1}$
in $\Lambda$.
Therefore,
the set of extended gauge transformations $\Gamma$, over a full 
configuration and its associated gauge field, can be defined through
definition-\ref{def:gaugetransf}.
\end{example*}

After this step, the CA is gauge-invariant through the use
of an external gauge field and an extended gauge transformation.

\subsection{Dynamics of the gauge field.}
The last step of this procedure is to transform the external gauge field
into an internal state of the CA which will evolve through a local rule. This
leads to the fifth and final question: what is the dynamics of the gauge field?
The choice of dynamics is constrained by the fact that the complete dynamics
should be gauge-invariant---i.e. verify condition \eqref{eqdef:gaugeinv} for 
$\overline{G}$.

\begin{example*}[1]
In our running example a simple gauge-invariant dynamics
for the gauge field, is to choose the identity. Eq. \eqref{eq:r3} thus
transforms into:
\begin{align}\label{eq:rfinal}
    \left(c^l_{t+1,x}, A_{t+1,x,x+1}, c^r_{t+1,x+1}\right) & =
    r\left(c_{t,x}, A_{t,x,x+1}, c_{t,x+1}\right)                                                                              \\
    & = \left({A_{t,x,x+1}}\big(c^l_{t,x+1}\big), \;A_{t,x,x+1}, \;
    {A_{t,x,x+1}} \big(c^r_{t,x}\big)\right)
\end{align}

The identity is indeed gauge-invariant with respect to $\Gamma$ and $Z$ in 
this example. This is due to the fact that $Z$ is the identity and as such,
Eq. \eqref{eqdef:gaugeinv} is a simple commutation relation which is trivially 
true for the identity.

This complete evolution is represented graphically in
Fig. \ref{fig:frameworkcomplete}.
\begin{figure}[ht!]
    \centering
    \includegraphics[width=0.8\textwidth]{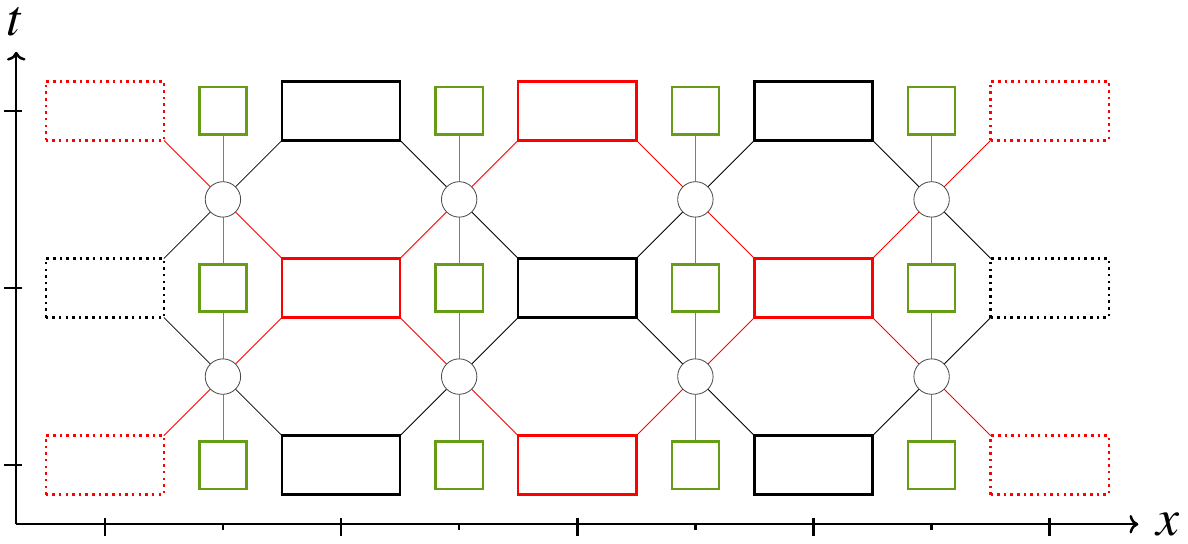}
    \caption{Space-time structure with the gauge field dynamics.}
    \label{fig:frameworkcomplete}
\end{figure}
\end{example*}

\section{Examples of reversible and quantum constructions}
\label{sec:examples}

The procedure helps designing new CA that implement gauge-invariance. In this
section, examples both in classical and quantum settings will be developed.
Every example presented here have the same space-time layout as that of 
Fig. \ref{fig:frameworkcomplete}. This choice is made for 
simplicity and clarity, but there is no claim of generality. Although it is 
known that PQCA are intrinsically universal \cite{arrighi2012partitioned},
it could be that they
are not the most general setting in which to define gauge-invariant CA.

For all of these examples, the local evolution $Z$ of the gauge transformations 
will be taken to be the identity. Then again, this is only a choice.

\begin{example*}[1]
This example has been developed extensively in the previous sections.
Fig. \ref{fig:rcaa} shows three space-time diagrams implementing the 
gauge-invariant rule given in Eq.\eqref{eq:rfinal}. An empty state for the
gauge field (in green) represents the identity while a filled state 
represents $\tau\otimes\tau$.

Sub-figure \ref{sub@subfig:rcaa1} has its gauge field set to the identity,
therefore coincides with $R$, with a "particle" going 
right.
Sub-figure \ref{sub@subfig:rcaa2} features the same physics 
(a particle going right)
but with a gauge transformation initially applied at the central position.
This is therefore understood as an equivalent situation, expressed differently.
Indeed, if a gauge transformation is later applied at the central position, 
the final configuration yields back that of sub-figure \ref{sub@subfig:rcaa1}.
Finally, Sub-figure \ref{sub@subfig:rcaa3} starts with a similar configuration
as sub-figure \ref{sub@subfig:rcaa1} but with one difference in the gauge
field, that does not come from having applied a gauge transformation.
Both diagrams end up very different. This last sub-figure shows 
that new dynamical behaviours arise, 
which could not have been witnessed without a gauge field.

\begin{figure}[ht!]
    \subfloat[With identity as gauge field\label{subfig:rcaa1}]
    {\includegraphics[width=.49\textwidth]{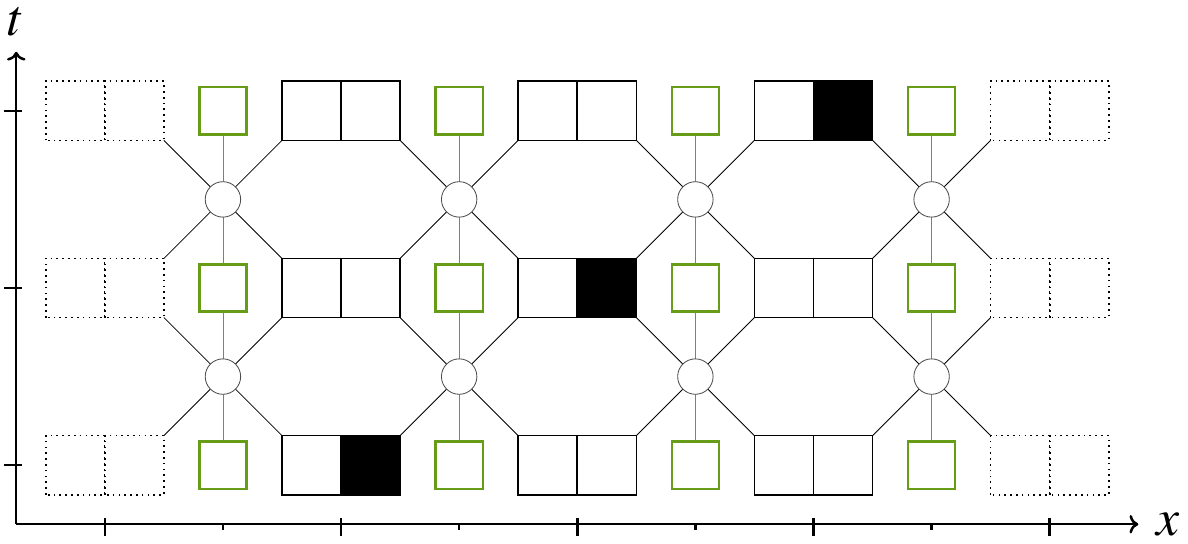}}
    \hfill
    \subfloat[Equivalent dynamics up to a gauge transformation
        \label{subfig:rcaa2}]
    {\includegraphics[width=0.49\textwidth]{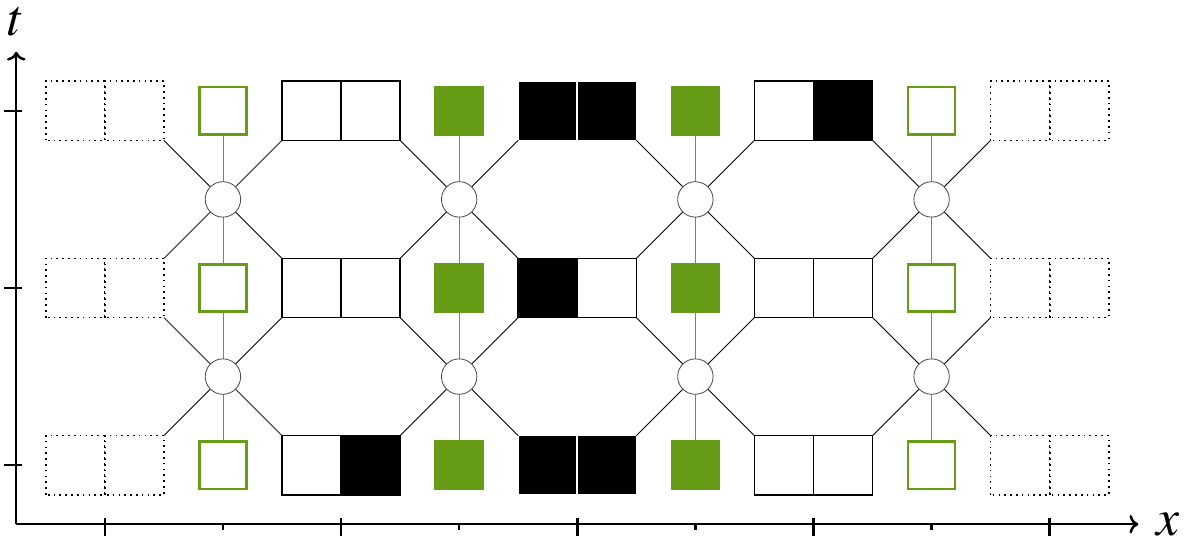}}
    \newline \centering
    \subfloat[New behaviours emerge from having the gauge field
        \label{subfig:rcaa3}]
    {\includegraphics[width=.49\textwidth]{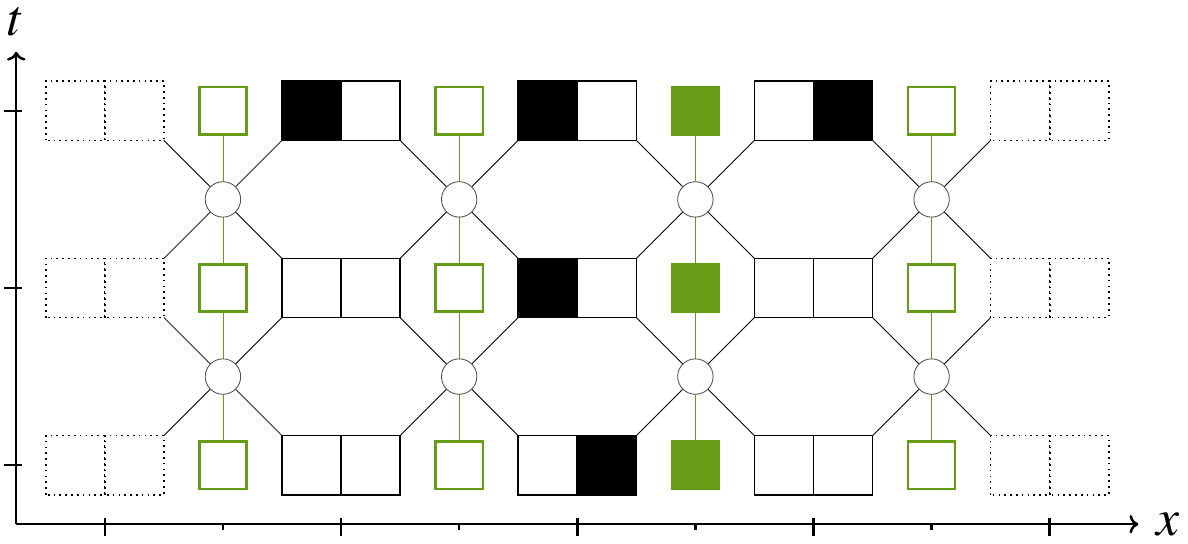}}
    \caption{\label{fig:rcaa}Three space-time diagrams using the running
        example (1) of Sec-\ref{sec:proc}.}
\end{figure}
\end{example*}

\begin{example*}[2]
In Physics, a distinction is often made between abelian and non-abelian gauge 
theories. The abelian gauge theory accounts for quantum electrodynamics (QED)
while non-abelian gauge theory accounts, for example,
for quantum chromodynamics (QCD).

It is easy to extend example (1) to become non-abelian by extending the
alphabet $\Sigma$ to $\{1,...,N\}^2$ for $N>2$. $G$ (the 
local gauge transformation restricted to the states $\Sigma$) is again a set
of permutations that change both components of the state in the same way. In 
fact we take all of them:
$$G=\{\sigma \otimes \sigma \; | \; \sigma \in S(N)\}$$
with $S(N)$ the set of permutations over $N$ elements. We can then define
the gauge field to again be $\Lambda=G$ ; the extended local gauge 
transformations $\overline{G}$ through Eq. \eqref{eq:gaugetransfall}
and the set of gauge transformations $\Gamma$ through 
definition-\ref{def:gaugetransf}.
Two local gauge transformations $g_x$ and $g_{x+1}$ do commute for the 
same reason as in the abelian case, thus $\Gamma$ is well defined.

Using those definitions, the local rule defined in Eq. \eqref{eq:rfinal} is
already gauge-invariant. This can be checked in a straightforward manner
through the exact same procedure as the abelian case.

Fig. \ref{fig:rcana} gives two space-time diagrams of a
non-abelian gauge-invariant CA with $N=3$---i.e. $|\Sigma|=9$ and $|G|=3!=6$.
An empty state for the gauge field 
represents the identity and a filled state represents $\tau\otimes\tau$
where $\tau$ is the permutation between black and white, leaving the 
gray untouched. Only those two state are represented here to keep the 
figure readable, even though there are more transformations available.

Sub-figure \ref{sub@subfig:rcana1} 
features an example of two "particles" crossing. Sub-figure 
\ref{sub@subfig:rcana2} represents just the same scenario, only with a gauge 
transformation $\tau\otimes\tau$ has been applied in the middle.

\begin{figure}[ht!]
    \subfloat[Two particles cross in an empty gauge field
        \label{subfig:rcana1}]
    {\includegraphics[width=0.49\textwidth]{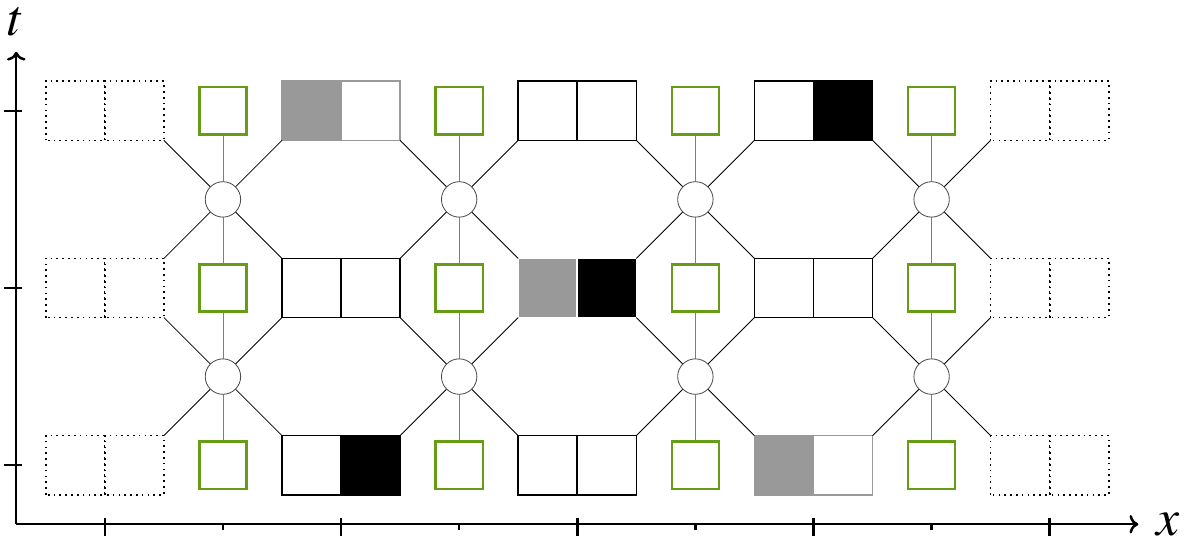}}
    \hfill
    \subfloat[Equivalent up to a gauge transformation
        \label{subfig:rcana2}]
    {\includegraphics[width=0.49\textwidth]{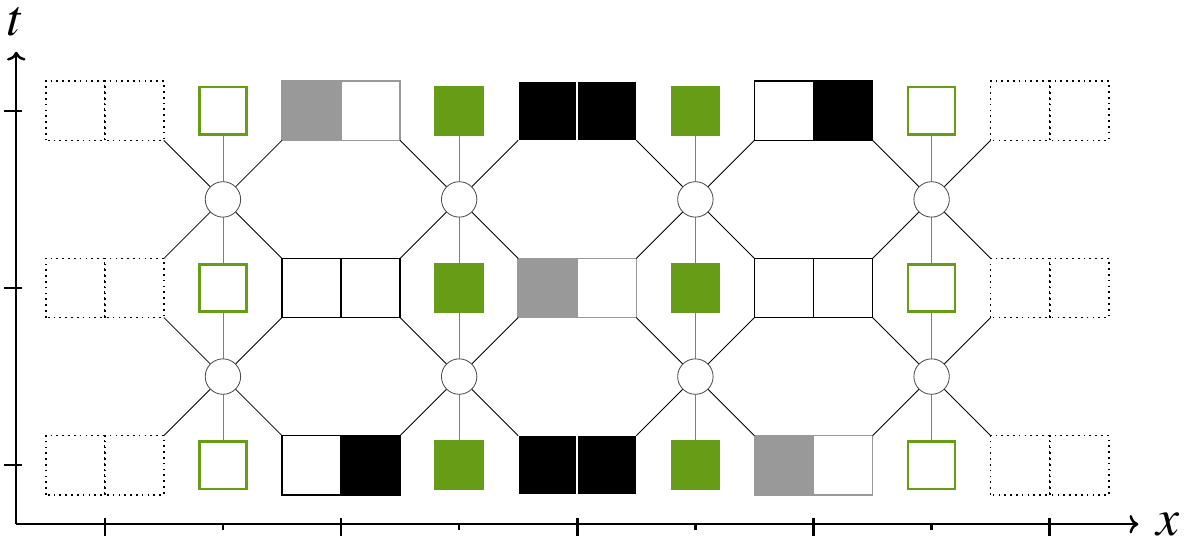}}
    \newline \centering
    \subfloat[New behaviours emerge from having the gauge field
        \label{subfig:rcana3}]
    {\includegraphics[width=.49\textwidth]{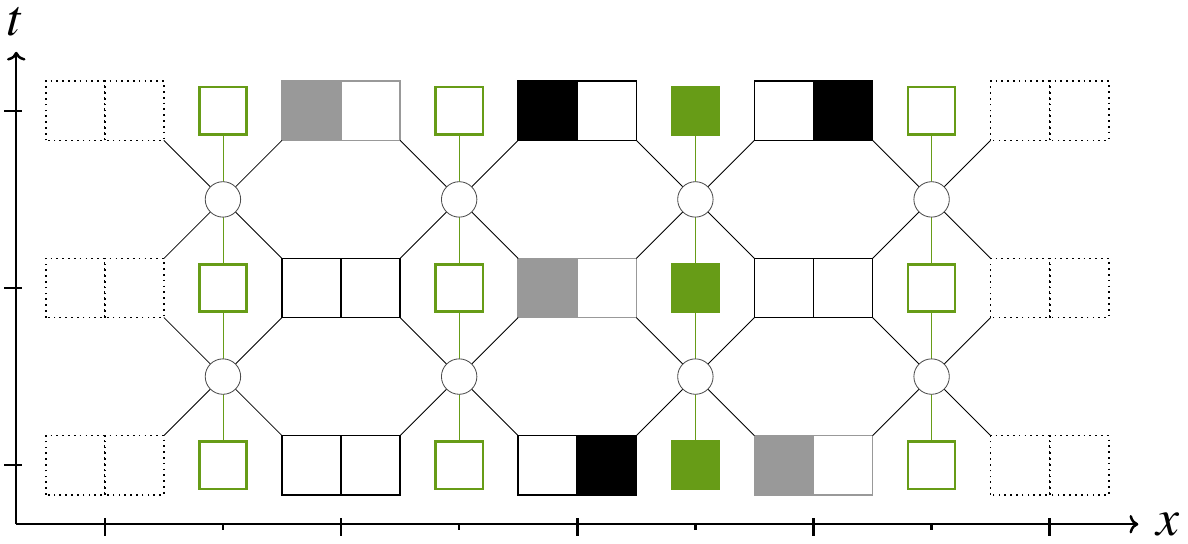}}
    \caption{\label{fig:rcana}Three examples of dynamics for
        a non-abelian version of the running example.}
\end{figure}
\end{example*}

\begin{example*}[3]
Now switching from the classical to the quantum setting, the same gauging
procedure has been applied to obtain a gauge-invariant QCA
\cite{Arrighi2020}, in the abelian case. This example will not be treated in 
detail, only the definition of the model and the reason for it being 
gauge-invariant will be given here. 
 
The classical configuration are obtained again from alphabet $\Sigma=\{0,1\}^2$. 
These two boolean numbers code the presence, or the absence of two 
fermions. The pseudo-spin of the fermion is encoded by the choice of the 
component, i.e. $01$ is the left-moving spin, $10$ is the right-moving spin. 
There can be two fermions of opposite spin, which is the state $11$, but there 
cannot be two fermions of the same spin, by the Pauli exclusion principle. 
This is sometimes referred to as the occupation number representation.
The gauge field can be seen as a counter of particles.

\paragraph{Evolution rule.} $r$ is now a unitary matrix that acts on 
$\mathcal{H}_\Sigma \otimes \mathcal{H}_\Lambda \otimes \mathcal{H}_\Sigma$
\begin{equation}
    r = \begin{pmatrix}
        I & 0 & 0 & 0\\
        0 & -is I & cV & 0 \\
        0 & cV^\dagger & -isI & 0 \\
        0 & 0 & 0 & -I
    \end{pmatrix} e^{\frac{i}{2}\epsilon^2g^2L^2}
\end{equation}
with $g$ a parameter called the charge, 
$\epsilon$ the distance between two points in space,
$I$ the identity,
$s$ and $c$ stands for $sin(m\epsilon)$ and $cos(m\epsilon$)
respectively where $m$ is the mass,
and for $l\in\mathbb{Z}$:
\begin{itemize}
    \item[ ] $V\ket{l} = \ket{l-1}$
    \item[ ] $L\ket{l} = l\ket{l}$.
\end{itemize} 

This evolution rule defines the dynamics directly for both the fermions and 
the gauge field. When $g=0$ this dynamics is just a non-interacting, 
multi-particle version of the Dirac quantum walk. If, furthermore,
the mass is zero, the fermions do not change direction.
The operator $V$ and its conjugate transpose $V^\dagger$ allows for 
the gauge field to act as a counter
of fermions that go through the link between two nodes. 
The exponential term is given here so as to 
be complete. It is the one which creates the interaction between the gauge 
field and the fermions. It will have no impact in the proof of 
gauge-invariance.

The minus sign of the bottom-right entry of the matrix is needed in
the qubit representation of two fermions crossing past each other. 
For the same reason, right before each new cell is formed, 
the following gate is applied on the incoming components.
\begin{equation}
    S=\begin{pmatrix}
        \;1\; & \;0\; & \;0\; & \;0\; \\
        0 & 1 & 0 & 0 \\
        0 & 0 & 1 & 0 \\
        0 & 0 & 0 & -1 \\
    \end{pmatrix}
\end{equation}

\paragraph{Gauge transformations.}
For $x$ a position, the local gauge transformation $g_x$ acts on the 
fermionic field at position $x$ and on the gauge field in positions  $(x-1,x)$
and $(x,x+1)$ which is reminiscent of the classical case. 
For $\varphi\in\mathbb{R}$, we define 
\begin{align}
    R_\varphi : &\ket{0} \mapsto \ket{0}\\
                &\ket{1} \mapsto e^{i\varphi}\ket{1}
\end{align}
which is a $U(1)$ gauge transformation, and for $l\in \mathbb{Z}$:
\begin{equation}
    T_\varphi \ket{l} = e^{i l\varphi} \ket{l}.
\end{equation}

Then the group $G$ is defined using $R_\varphi$ and $T_\varphi$ for any 
$\varphi$ in $\mathbb{R}$, where $R_\varphi$ is applied on each component of a
state in $\Sigma$ and $T_\varphi$ is applied to the gauge field:
\begin{equation}
    G = \Big\{T_\varphi \otimes \big(R_\varphi \otimes R_\varphi\big)
        \otimes T_{-\varphi} \; | \; \varphi \in \mathbb{R} \Big\}
\end{equation}

The local gauge transformation at positions $x$ and $y$ commute, 
so the gauge transformation set $\Gamma$ can again be defined through 
definition-\ref{def:gaugetransf} as the set of operators that apply, locally,
any local gauge transformation.

\paragraph{Gauge-invariance.}
This model is gauge-invariant with $Z$ as the identity. In order to prove it 
let us focus on the input $\ket{mln}$ of a gate. When applying a gauge 
transformation $g_\varphi$, this input state will trigger a phase gain 
$\mathcal{V}(x,m-l,n+l)$:
\begin{align}
    &m\varphi(x) + l\big(\varphi(x+1) - \varphi(x)\big) + n \varphi(x+1)\\
        &=  (m-l)\varphi(x) + (n+l)\varphi(x+1).
\end{align}
However, the numbers $(m-l, n+l)$ are invariants of $r$, as it takes 
$\ket{mln}$ into a superposition of the form:
\begin{equation}
    \sum_{i\in\{-1,0,1\}}\alpha_i \ket{m-i, l-i, n+i}.
\end{equation}
It follows that the phase gain will be the same whether the gauge 
transformation is applied before or after the 
evolution and thus, the evolution commutes with the gauge transformations. 
Hence, the QCA is gauge-invariant under $\gamma$ and for $Z$ being
the identity.

\paragraph{Physical model.}
This QCA is quite specific because it was conceived so that it provides 
a discrete space-time formulation of one-dimensional quantum electrodynamics, 
which is one of the four fundamental interactions in Physics 
\cite{Arrighi2020}.
The continuous limit for this specific model has not been formally derived, 
however this limit has been done in the free case both for continuous time
 \cite{di2020quantum,Manighalam2021ContinuousTL} and the same method could potentially work 
in this case.
 

\end{example*}

\section{Degrees of freedom induced by gauge-invariance}
\label{sec:theorems}

\subsection{Equivalence of theories}

Given a set of gauge transformations $\Gamma$, multiple CA may
lead to equivalent dynamics up to $\Gamma$: 
\begin{definition}[Equivalence of gauge-invariant CA] \label{def:equiv}
    Let $T$ be a gauge-invariant CA with respect to a given $\Gamma$
    and $Z$.
    $T$ is simulated by a CA $T'$ if and only if for each element $c$ of 
    $\mathcal{C}$ (or $\mathcal{H}$ in the quantum case)
    there exists $\gamma, \gamma' \in \Gamma$ such that $(\gamma \circ T)(c)
    = (T'\circ \gamma')(c)$. They are equivalent if both simulate each other. 
    Equivalence will be denoted $T\equiv T'$. 
\end{definition}

In practice, $T$ is gauge-invariant with respect to a specific $\Gamma$ and 
$Z$. Adding a constraint on $Z$, one may characterize the equivalence
of two CA using different quantifiers and constraints which 
may be useful for some specific problems.

\begin{proposition}[Characterization of equivalence for gauge-invariant CA]
Let $T$ be a gauge-invariant CA with respect to $\Gamma$ and $Z$ and $T'$
another CA over the same alphabet as $T$.
If $Z$ is reversible and $T'$ is gauge-invariant with 
respect to $\Gamma$ and $Z$, then these three statements
are equivalent:
\begin{enumerate}
    \item $T$ is simulated by $T'$.
    \item $\forall c, \exists \gamma \in \Gamma$ such 
        that $T(c) = T'\circ \gamma (c)$.
    \item $\forall c, \forall \gamma \in \Gamma$, 
        $\exists \gamma' \in \Gamma$ such that $\gamma
        \circ T(c) = T'\circ \gamma' (c)$.
\end{enumerate}
\end{proposition}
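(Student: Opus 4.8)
The plan is to prove the three items equivalent by closing the implication cycle $(1)\Rightarrow(3)\Rightarrow(2)\Rightarrow(1)$. Two ingredients are used throughout. First, the \emph{identity} gauge transformation belongs to $\Gamma$ (take every $g_x$ to be the identity of $G$ in Definition~\ref{def:gaugetransf}). Second, $\Gamma$ is a group under composition: composing $\prod_x g_x$ with $\prod_y h_y$ and using that $g_x$ commutes with $h_y$ whenever $x\neq y$ lets one regroup the result as $\prod_x (g_x\circ h_x)$, with each $g_x\circ h_x\in G$, and the inverse of $\prod_x g_x$ is $\prod_x g_x^{-1}$; hence $\Gamma$ is closed under composition and inversion.

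The two implications $(3)\Rightarrow(2)$ and $(2)\Rightarrow(1)$ are immediate specializations of quantifiers. For $(3)\Rightarrow(2)$, instantiate the universally quantified $\gamma$ in item (3) with the identity: then $\gamma\circ T(c)=T(c)$, and the $\gamma'$ supplied by (3) is exactly the witness required by (2). For $(2)\Rightarrow(1)$, observe that item (2) is literally the instance of the simulation relation of Definition~\ref{def:equiv} in which the outer transformation is the identity, so (1) holds with $\gamma=\mathrm{id}$ and with $\gamma'$ the transformation provided by (2).

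The substantive step is $(1)\Rightarrow(3)$. Fix a configuration $c$ and an arbitrary target $\delta\in\Gamma$. By (1) there exist $\gamma,\gamma'\in\Gamma$ with $\gamma\circ T(c)=T'\circ\gamma'(c)$. I left-compose both sides with $\mu:=\delta\circ\gamma^{-1}\in\Gamma$, so that the left-hand side collapses to $\delta\circ T(c)$. On the right-hand side I push $\mu$ through $T'$: since $T'$ is gauge-invariant with respect to the same $Z$, relation \eqref{eqdef:gaugeinv} gives $Z(\nu)\circ T'=T'\circ\nu$ for every $\nu\in\Gamma$, and reversibility of $Z$ lets me take $\nu=Z^{-1}(\mu)$, yielding $\mu\circ T'=T'\circ Z^{-1}(\mu)$. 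Therefore $\delta\circ T(c)=T'\circ\big(Z^{-1}(\mu)\circ\gamma'\big)(c)$, and $\gamma'':=Z^{-1}(\mu)\circ\gamma'\in\Gamma$ is the witness required by (3). This closes the cycle.

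The only real obstacle is bookkeeping which maps remain inside $\Gamma$, and this is precisely where the hypotheses enter: reversibility of $Z$ ensures $Z^{-1}$ is defined on $\Gamma$ and sends $\Gamma$ to $\Gamma$, while the group structure of $\Gamma$ closes the compositions $\delta\circ\gamma^{-1}$ and $Z^{-1}(\mu)\circ\gamma'$. It is worth noting that the argument uses the gauge-invariance of $T'$ but not that of $T$, and that neither $T$ nor $T'$ need be reversible; no computation beyond these closure facts is required.
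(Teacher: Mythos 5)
Your proof is correct, and it uses the same core mechanism as the paper --- the group structure of $\Gamma$ together with the relation $\mu\circ T'=T'\circ Z^{-1}(\mu)$, obtained from gauge-invariance of $T'$ and reversibility of $Z$ --- but it organizes the equivalence differently. The paper runs the cycle $(1)\Rightarrow(2)\Rightarrow(3)\Rightarrow(1)$, with two substantive steps: $(1)\Rightarrow(2)$ cancels the left-hand $\gamma$ and pushes $\gamma^{-1}$ through $T'$, while $(2)\Rightarrow(3)$ factors the witness of (2) as $\gamma=\gamma_3\circ\gamma_1$ and pushes $\gamma_3$ through $T'$. You run the cycle in the opposite direction, $(1)\Rightarrow(3)\Rightarrow(2)\Rightarrow(1)$, concentrating all the work in a single implication and reducing the other two to quantifier specializations at the identity. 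This buys you something real: statement (3) universally quantifies the transformation applied \emph{after} $T$ and existentially quantifies the one applied \emph{before} $T'$, and your argument addresses exactly that arrangement, starting from an arbitrary left-hand $\delta$ and constructing the right-hand witness $Z^{-1}(\delta\circ\gamma^{-1})\circ\gamma'$. The paper's $(2)\Rightarrow(3)$ step, as written, produces the pair the other way around --- an arbitrary right-hand $\gamma_1$ with a left-hand compensator $Z(\gamma_3)^{-1}$ --- so recovering (3) as literally quantified requires an additional surjectivity remark (choosing $\gamma_1$ so that $Z(\gamma\circ\gamma_1^{-1})^{-1}$ equals the prescribed left-hand transformation, which again uses reversibility of $Z$) that the paper leaves implicit; your route sidesteps this entirely. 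You are also more explicit than the paper about the bookkeeping it takes for granted: that $\Gamma$ contains the identity and is closed under composition and inversion (via the commutation requirement of Definition~\ref{def:localgaugetransf}), and that gauge-invariance of $T$ itself is never needed.
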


\begin{proof}
    We shall prove the equivalence through three implications. The proof is 
    given in the classical setting, but carries through to the quantum case 
    where $c$ takes its value 
    in $\mathcal{H}$ instead of $\mathcal{C}$.
    \begin{itemize}
        \item Suppose (1), then for $c$ a configuration, 
            we have $\gamma, \gamma' \in \Gamma$ such that 
            $(\gamma \circ T)(c) = (T'\circ \gamma')(c)$. 
            But since $\Gamma$ is a group, it implies that
            $T(c) = (\gamma^{-1} \circ T' \circ \gamma')(c)$.
            And since $Z$ is reversible, we obtain $T(c) = 
            (T'\circ Z^{-1} (\gamma^{-1}) \circ \gamma') (c)$.
            However, $Z^{-1} (\gamma^{-1}) \circ \gamma'$ is an
            element of $\Gamma$ therefore we have proven that (1)
            implies (2).
        \item Suppose (2), let $c$ be a configuration and take
            $\gamma\in\Gamma$ such that $T(c)=(T'\circ\gamma)(c)$. 
            Since $\Gamma$ is a group, for any $\gamma_1 \in \Gamma$
            there exists $\gamma_3 \in \Gamma$ such that 
            $\gamma = \gamma_3 \circ \gamma_1$.
            Therefore, from gauge-invariance of $T'$, 
            $T(c) = (Z(\gamma_3) \circ T' \circ \gamma_1) (c)$
            which is equivalent to $(Z(\gamma_3)^{-1} \circ T)(c) = 
            (T\circ \gamma_1) (c)$ because $G$ is a group. 
            And writing $\gamma_2 = Z(\gamma_3)^{-1}$
            which is in $\Gamma$, we conclude that (2) implies (3).
        \item The fact that (3) implies (1) is immediate because (3) is 
            a generalization of (1): both statements differ only 
            by the quantifier before $\gamma$. If for any $\gamma$ the 
            property is true, then it is also true for one specific $\gamma$.
    \end{itemize} \qed
\end{proof}
    
\begin{example*}[1]
Fig. \ref{fig:equiv} shows two diagrams of equivalent CA. The local 
rule $r$ driving the evolution in sub-figure \ref{sub@subfig:equiv1} is the 
one given in Eq.\eqref{eq:rfinal} for the abelian example
whilst the local rule driving the evolution
of sub-figure \ref{sub@subfig:equiv2} is $r'=\gamma \circ r$ where 
$\gamma$ is the gauge transformation that swaps black and 
white everywhere (note that this does not impact the gauge field because the 
transformation on each side of it cancel out).

\begin{figure}[ht!]
    \subfloat[Dynamics developed previously\label{subfig:equiv1}]
    {\includegraphics[width=0.49\textwidth]{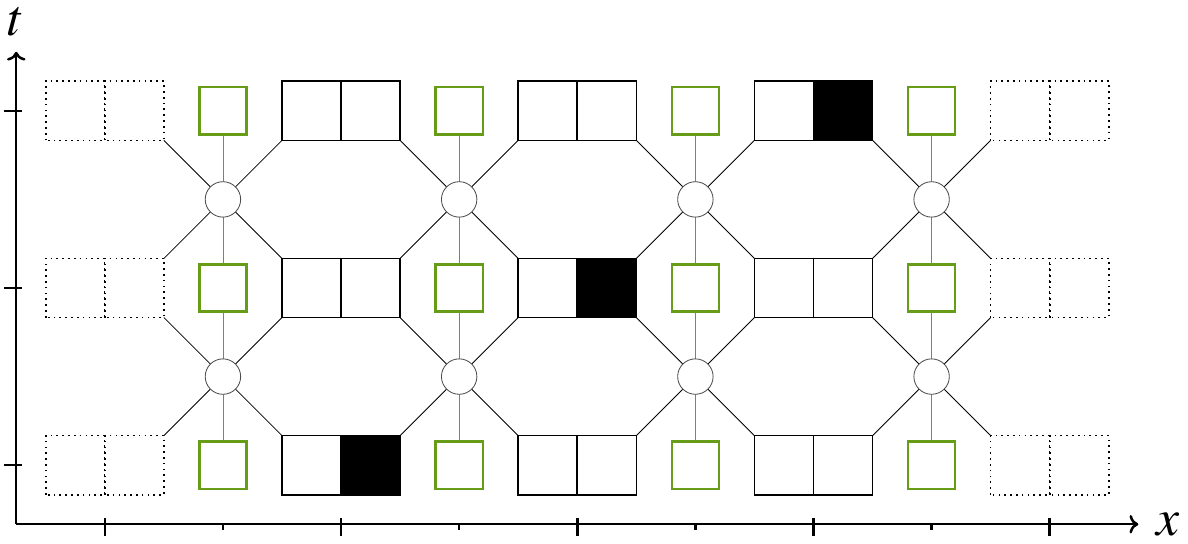}}
    \hfill
    \subfloat[A gauge-equivalent dynamics
        \label{subfig:equiv2}]
    {\includegraphics[width=0.49\textwidth]{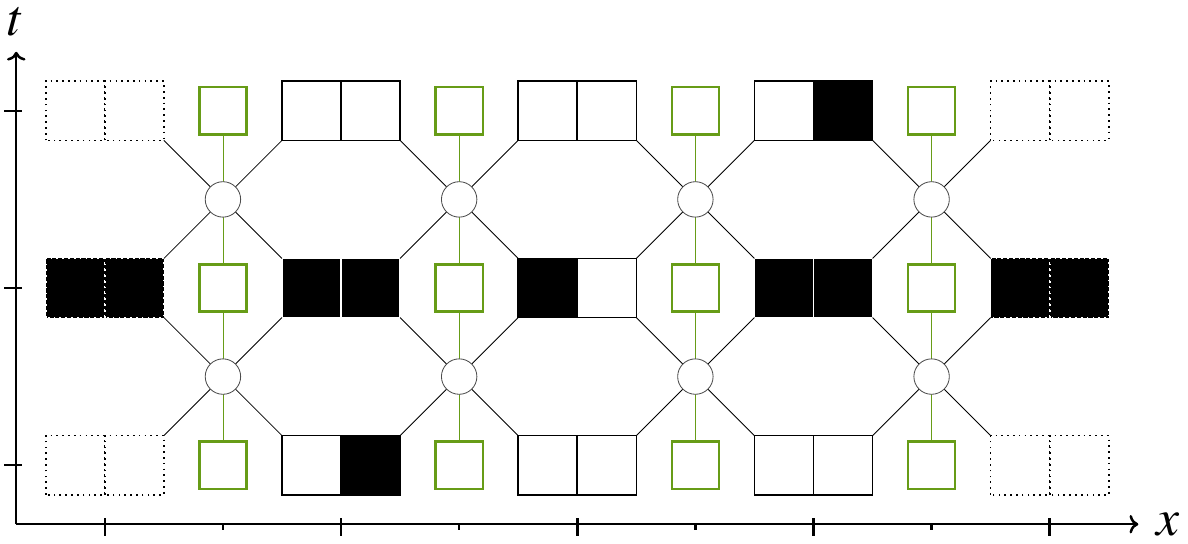}}

    \caption{\label{fig:equiv}Example of two equivalent CA.}
\end{figure}
\end{example*}

\subsection{Gauge-fixing and gauge-constraining}
Gauge-invariance states that there is a degree of freedom in both the 
dynamics---equivalence of CA---and the set of possible states. 
Gauge-fixing means choosing, amongst the many possible dynamics. 
Gauge-constraining, on the other hand, 
deals with the removal of the degrees of freedom, e.g. via the choice of a 
canonical representative element for each class of gauge 
equivalent configurations.

\paragraph{Gauge-fixing.}
With equivalence of CA, one can use many different representations 
for the same evolution model: two equivalent CA will model the same dynamics 
up to a gauge transformation. Therefore, there is a degree of freedom
in choosing a specific CA as a model for a specific dynamics. 
Choosing this degree of freedom is called gauge-fixing.

In other words, the explicit evolution scheme is undetermined because of 
the gauge-invariance: if a configuration $c$ at times $t$ evolves into
a configuration $c'$, it is the same as if it evolved into $\gamma(c')$
for $\gamma$ a gauge transformation. Gauge-fixing is the choice of an 
explicit evolution scheme. 


\paragraph{Gauge-constraining.}
The redundancy induced by gauge-invariance is somewhat problematic, because
it is there in the state space, when really it should not be observable.
In other words, a gauge transformation 
should not alter the observation. In quantum mechanics, for a system described 
by a density matrix $\rho$, what is being measured for $\mathcal{O}$ an 
observable, is the expected value: 
$$Tr(\mathcal{O}\rho).$$
A gauge transformation $\gamma$ is a unitary that will act on a density matrix
as follows :
$$ \rho \mapsto \gamma \rho \gamma^\dagger$$
with $\gamma^\dagger$ being the conjugate transpose of $\gamma$.

One may wish to restrict the
states, or observables, to being physical. That is to say 
for any gauge transformation $\gamma$, the equality $Tr(\mathcal{O} \,\rho) =
Tr\big(\mathcal{O} \gamma \rho \gamma^\dagger\big)$ should hold. 
There are two ways to do so,
one is to restrict the set of allowed observables to those which commute with 
the gauge transformations. Indeed, by the cyclic property of the trace, the 
action of $\gamma$ and $\gamma^\dagger$ will then cancel out.

Another way is to allow any set of observables but to restrict the states 
to those which commute with gauge transformations---i.e. $[\gamma,
\rho] = 0$ for any gauge transformation $\gamma$. This is called 
gauge-constraining. Again, this would lead to $\gamma$ and $\gamma^\dagger$ 
cancelling out.

Gauge-constraining is frequently used in Physics, but still remains 
to be formulated for classical CA.

\section{Conclusion}
\label{sec:conclu}
\paragraph{Summary.}
The paper followed a constructive approach to gauge-invariance in Cellular 
Automata (CA). Formal definitions of CA, gauge transformations and 
gauge-invariance were given. For $R$ a CA and $\Gamma$ a set of gauge 
transformations, it was shown how to obtain a gauge-invariant CA $T$ through 
the introduction of a gauge-field, whilst keeping $R$ as a sub-case. 
The extension of $R$ into $T$ is called 
gauging procedure and comes from Physics. Three examples of gauge-invariant 
CA were then given, for abelian and non-abelian 
gauge transformation in classical CA, and for abelian gauge transformations 
in a QCA. Because of the redundancy inherent to 
gauge-invariance, CA implementing a gauge-invariance with respect to the same
set of gauge transformations can have the same dynamics up to gauge 
transformations, which means those two theories are equivalent. The equivalence
of CA was defined and characterized. Finally, 
gauge-fixing and gauge-constraining were introduced as ways 
for choosing or removing the degrees of freedom induced by gauge-invariance.

\paragraph{Related works.}
A number of discrete counterparts to Physics symmetries have been 
reformulated in terms of CA, including reversibility, 
Lorentz-covariance \cite{arrighi2014discrete} and conservations laws and 
invariants \cite{formenti2011hierarchy}. To our 
knowledge the closest work is the colour-blind CA construction \cite{salo2013} 
which implements a global colour symmetry without porting it to the local 
scale. However gauge symmetries have been implemented in the one-particle 
sector of Quantum CA, a.k.a for Quantum Walks
\cite{arnaultdebbasch2016quantum,arnault2017discrete,cedzich2019quantum}. One of the authors 
had followed a similar procedure in order to introduce the electromagnetic 
gauge field \cite{di2014quantum,marquez2018electromagnetic}, and that of the weak and strong interactions 
\cite{arnault2016quantum,di2016quantum}. This again was done in the very fabric 
of the Quantum Walk and the associated symmetry was therefore an intrinsic 
property of the Quantum Walk. But the gauge field would remain continuous, 
and seen as an external field. Recently, this symmetry has been studied in the
classical CA for both abelian and non-abelian gauge-invariance 
\cite{arrighi2018gauge,arrighi2019non} and for a QCA as well \cite{Arrighi2020}.

There are, of course, numerous other approaches to space-discretized gauge theories, the main ones being Lattice Gauge Theory 
\cite{banuls2017efficient,emonts2020gauss,kaplan2020gauss,rothe2012lattice} and the Quantum Link Model \cite{chandrasekharan1997quantum}, which were phrased in terms of Quantum Computation--friendly terms through Tensor 
Networks \cite{rico2014tensor,zohar2018combining} and can be linked in a unified framework \cite{silvi2014}. The tensor network representation is based on the locality of the Hamiltonian evolution, the system is not described in terms of large vectors, but with local tensors and boundary constraints, allowing for a more efficient description. Changing the parameters of the tensor using variational optimization techniques---e.g. gradient descent---over the expected energy $\bra{\psi} H \ket{\psi}$ allows to recover the ground state which describes the low-energy Physics of the system \cite{ercolessi2018phase,felser2020two,magnifico2020real,magnifico2021lattice,notarnicola2015discrete}. Discretized gauge-theories have also arisen from other horizons, such as Ising models \cite{silvi2014,wegner1971}. With the recent progress in quantum hardware---e.g. trapped ions, Rydberg atoms, quantum superconducting---quantum simulations can also be directly implemented on quantum technologies \cite{banuls2020simulating,klco20202}.

All of these approaches, however, begin with a well-known continuous gauge 
theory which is then space-discretized---time is usually kept continuous. 
An interesting attempt to quantum discretize gauge theories in discrete time,
on a general simplicial complex can be found in \cite{kornyak2009discrete}.

At the time of the writing, some theoretical questions about 
gauge-invariant cellular automata remained open. 
Is it always possible to make a CA gauge-invariant under any 
groups of gauge transformation? How do this model relate to 
color-blind cellular automata \cite{salo2013}?
These two questions have since then been answered by two of 
the authors \cite{arrighi_et_al:LIPIcs.MFCS.2021.9}.

\paragraph{Perspectives.}  The hereby developed methodology 
has already been applied to Quantum CA (QCA) \cite{Arrighi2020} with abelian
gauge symmetry,
so as to obtain the Schwinger model for quantum electrodynamics. We 
believe it can be further extended to non-abelian gauge-invariance in 
order to have, for example, a discrete counterpart to quantum chromodynamics,
and to 2 or 3 dimensions in space so as to expand the possible dynamics.
Such discretized theories may be of interest in 
Physics especially in non-perturbative theories
\cite{strocchi2013introduction}, but they may also represent practical assets 
as quantum simulation algorithms, i.e. numerical schemes that run on Quantum 
Computers to efficiently simulate interacting fundamental particles 
theories---a task which would take a very long time on classical computers.

Another perspective lies in error-correction. Taking the set of 
gauge transformations to be the set of errors, we expect that gauge-invariance
provides a way to obtain a dynamics that would be error-free. Applied in the 
context of CA, this would correspond to error-correction in spatially 
distributed systems.

\begin{acknowledgements}
    NE would like to thank Simone Montangero and Giuseppe Magnifico for
    helpful discussions. 
    
    GDM acknowledges the invitational Fellowships for
    Research in Japan supported by Japan Society for the Promotion of 
    Science (JSPS), Bridge Program (ID: BR190201).

    This publication was made possible through the support of the ID\# 61466
    grant from the John Templeton Foundation, as part of the “The Quantum 
    Information Structure of Spacetime (QISS)” Project (qiss.fr). The opinions 
    expressed in this publication are those of the author(s) and do not 
    necessarily reflect the views of the John Templeton Foundation.
\end{acknowledgements}

%
%

\bibliographystyle{spmpsci}      
\bibliography{biblio}   

\end{document}